\documentclass[a4paper,twoside,11pt]{article}
\usepackage{amssymb}
\usepackage{amstext}
\usepackage{amsmath}
\usepackage{amsthm}
\usepackage{multirow}
\newtheorem{definition}{Definition}[section]
\newtheorem{theorem}{Theorem}[section]
\newtheorem{lemma}[theorem]{Lemma}

\usepackage{fullpage}
\newcommand\Tstrut{\rule{0pt}{2.6ex}}       
\newcommand\Bstrut{\rule[-0.9ex]{0pt}{0pt}} 
\newcommand{\TBstrut}{\Tstrut\Bstrut} 
\newcommand{\etal}{\textit{et al}.} 
\newcommand{\ie}{\textit{i}.\textit{e}.}

\newcommand{\N}{Niederreiter }

\newcommand{\Addresses}{{
  \bigskip
  \footnotesize

  \textsc{IISER Pune, Pune, India}\par\nopagebreak
  \textit{E-mail address}, U.~Kapshikar: \texttt{uskapshikar@gmail.com}\par\nopagebreak
  \textit{E-mail address}, A.~Mahalanobis: \texttt{ayan.mahalanobis@gmail.com}}}

\title{A Quantum-Secure \N Cryptosystem using Quasi-Cyclic Codes}
\author{Upendra Kapshikar \and Ayan Mahalanobis}
\date{}
\begin{document}
\maketitle
\abstract{In this paper, we describe a new variant of Niederreiter cryptosystem over quasi-cyclic codes of rate $\frac{m-1}{m}$. We show that the proposed cryptosystem is quantum secure, in particular, it resists quantum Fourier sampling and has better transmission rate with smaller keys compared to the one using binary Goppa codes. }

\vspace{0.1cm}
\noindent\textbf{Keywords:} \emph{{Niederreiter Cryptosystem, Post-quantum Cryptograpy, Quasi-Cyclic codes.}}

\section{Introduction}
\label{sec:introduction}
\noindent In this paper, we develop a new \N cryptosystem using $\frac{m-1}{m}$ quasi-cyclic codes. These are well known \emph{linear block codes}. The main result in this paper is, the \N cryptosystem built with these codes is \emph{quantum-secure}. Many attempts were made by many distinguished authors to built new McEliece or Niederreiter cryptosystems using different codes. However, none of these authors demonstrated that their cryptosystem is quantum-secure. One of the main reason to study the \N cryptosystem is that it can be quantum-secure. In today's world designing cryptosystem just got twice as hard. One needs to ensure that it is secure from known classical attacks \emph{as well as} secure from the known quantum attacks, \ie, from the hidden subgroup problem arising out of quantum Fourier sampling.

The concept of security which is tied to quantum Fourier sampling for non-abelian groups has a rich tradition in the work of Hallgreen \etal~\cite{hallgreen} and Kempe and Salev~\cite{Kempe}. Dinh \etal~\cite{Dinh} using earlier work showed that the McEliece cryptosystem built on binary Goppa codes is quantum-secure. We use their theorem~\cite[Theorem 4]{Dinh} to show that our proposed cryptosystem is quantum-secure. This work is part of first author's master's thesis~\cite{th}.
\paragraph{Description of the parity check matrix used for the proposed Niederreiter cryptosystem}\label{dd}
Recall that we are talking about $\frac{m-1}{m}$ quasi-cyclic codes over $\mathbb{F}_{2^l}$. For the cryptosystem to be quantum-secure the parity check matrix $\mathcal{H}$ for the $[n=mp,k=(m-1)p]$, $\frac{m-1}{m}$ quasi-cyclic code should satisfy the following conditions:
\begin{itemize}
\item[I] Integers $m,p$, such that $p$ is a prime and $m$ is bounded above by a polynomial in $p$.
\item[II] The matrix $\mathcal{H}$ is of size $p\times mp$ over $\mathbb{F}_{2^l}$.
\item[III] The matrix $\mathcal{H}$ is of the form $\mathcal{H}=\left[\,C_0=I\,|\,C_1\,|\,C_2\,|\ldots\,|\,C_{m-1}\,\right]$, where each $C_i$ is a circulant matrix of size $p$. Each $C_i$ for $i>0$ should contain an element from a proper extension of $\mathbb{F}_2$.
Furthermore, we denote the matrix $\mathcal{H}$ as $\left[\,I\,|\,C\,\right]$ where $C$ is the concatenation of the circulant matrices $C_i$, $i>0$.
\item[IV] Define $T_{\mathcal{H}}$ corresponding to $\mathcal{H}$ as $T_{\mathcal{H}}=\left\{P_1\in \mathrm{S}_p \;|\; \exists P_2\in \mathrm{S}_{p(m-1)} \;\text{s.t.}\; P_1CP_2=C\right\}$, where $\mathrm{S}_n$ is the symmetric group acting on $n$ letters. It is easy to see that $T_\mathcal{H}$ is a permutation group action on $p$ letters. The condition we impose on $\mathcal{H}$ is that $T_\mathcal{H}$ is not 2-transitive.
\item[V] No two columns of $C$ are identical. 
\end{itemize}
For more on quantum security of the proposed cryptosystem a reader can skip to Section~\ref{variant}.
\section{Review of Niederreiter Cryptosystem}
\noindent In 1978, Robert McEliece~\cite{ME} suggested a cryptosystem based on algebraic codes. The system did not gain much popularity then because of its large key sizes. However, with quantum computing becoming a reality, McEliece cryptosystems have become the center of attention for cryptographers. Unlike RSA or ElGamal cryptosystem, McEliece cryptosystem is based on a non-commutative structure which allegedly makes it a strong candidate for \emph{post-quantum cryptography}. Compared to traditional cryptosystems, a McEliece cryptosystem has following advantages:
\begin{description}
\item[a)] It is fast. Faster than RSA or ElGamal.
\item[b)] It is believed to be quantum secure.
\end{description}
The disadvantages are the following:
\begin{description}
\item[a)] The key sizes are huge.
\item[b)] The ciphertext becomes much larger than the plaintext because of the redundancy added by the encoding process.
\end{description} 
In this paper we focus on building a \N cryptosystem. Apart from fulfilling the obvious requirement of a signature scheme, Niederreiter cryptosystem is faster compared to a McEliece cryptosystem. Li \etal~\cite{Li} proved that the security of a McEliece cryptosystem and its Niederreiter counterpart are equivalent under the Lee-Brickell attack~\cite{Lee}. For these reasons, in this paper our main focus is on \textbf{building a Niederreiter cryptosystem}.

All codes in this paper are \emph{linear block codes} and our standard reference for coding theory is Blahut~\cite[Chapter 3]{blahut}.
A binary linear code $\mathcal{C}$ of length $n$ and rank $k$ is a $k$ dimensional linear subspace of $\mathbb{F}_{2}^n$. Hamming weight or simply weight $t$ of a codeword means that the codeword has $t$ non-zero entries. Standard distance on $\mathcal{C}$ is the hamming distance. Distance of a linear code $\mathcal{C}$ is defined as minimum distance between two non-zero codewords. Traditionally, such a code is denoted as  $\left[\,n,k\right]$ code. The ratio $\frac{k}{n}$
is the transmission rate of the code.

A generator matrix $M$ of an $\left[n,k \right]$ linear code $\mathcal{C}$ is a $k\times n$ matrix such that $\mathcal{C}= \lbrace xG: x\in \mathbb{F}_{2}^k \rbrace$. A generator matrix $M$ of the form $\left[\,I_{k}\,|\, G^{\prime} \right]$ is said to be in the systematic form. A parity check matrix $\mathcal{H}$ of an $\left[ n,k \right]$ linear code $\mathcal{C}$ is an $\left( n-k \right) \times n$ such that $\mathcal{H}x=0$ if and only if $x\in \mathcal{C}$. A code generated by $\mathcal{H}$ is known as the dual code and denoted by $\mathcal{C}^\perp.$

A decoding problem is given $x\in \mathbb{F}_{2}^n$  find a codeword $c\in \mathcal{C}$ that is closest to $x$. This problem for a random linear code is called the \emph{general decoding problem} and is known to be NP-hard~\cite{berlekamp1978inherent}. But for some codes this decoding problem can be solved efficiently. If such an algorithm is available for a code $\mathcal{C}$, we say that $\mathcal{C}$ has a decoder.
\subsection{Quasi-Cyclic Codes}
\noindent A quasi-cyclic code (QCC) is a block linear code which is a simple generalisation of the cyclic code. It is such that any cyclic shift of any codeword by $m$ symbols gives another codeword. 
We are particularly interested in $\frac{m-1}{m}$ rate codes. In particular our system is based on $\frac{m-1}{m}$ rate codes over $\mathbb{F}_{{2}^{l}}$. Such codes along with quasi-cyclic codes of rate $\frac{1}{m}$ were studied in great detail by Gulliver~\cite{Gulliver_thesis}. 
\begin{definition}[Circulant matrix] A square matrix is called circulant if every row, except for the first row, is a circular right shift of the row above that.
\end{definition} 
A rate $\frac{m-1}{m}$ systematic QCC has an $p \times mp$ parity check matrix of the form 
$\mathcal{H}= \left[\,I_{p}\,|\,C_{1}\,|\,C_{2}\,|\,\ldots\,|\,C_{m-1}\,\right]$ where each $C_{i}$ is a circulant matrix of size $p$ and $I_{p}$ is the identity matrix of size $p$.

For compactness we denote $\mathcal{H}= \left[\,I\,|\, C\,\right]$. In a recent work Aylaj \textit{\etal}~\cite{Aylaj} found a way to construct generator matrices for such codes over $\mathbb{F}_{2}$. Since these generator matrices are in systematic form one can easily construct parity check matrix from these generator matrices. As said previously, our main interest lies in codes over extension fields. Gulliver~\cite[Chapter 6]{Gulliver_thesis} shows that quasi-cyclic codes over extension fields can be MDS (maximum distance separable) codes. As the name suggests,  MDS codes can achieve large minimum distance and so there would be no low weight codewords. This plays an imporatnt role in the classical security of McEliece and Niederreiter cryptosystems, such as the Stern's attack and the Lee-Brickell attack. Though Gulliver~\cite{Gulliver_thesis} only presents examples of MDS codes with rate $\frac{1}{m}$, he does present a case to study quasi-cyclic codes of rate $\frac{m-1}{m}$ with large minimum distances. 

\subsection{Decoding} \noindent Quasi-cyclic codes are well studied and well established codes and depending on how one constructs them there are various decoders available. We briefly mention some of them here. Gulliver ~\cite[Appendix B]{Gulliver_thesis} presents a ML (majority logic) decodable QCCs. Another new and intersting way of decoding quasi-cyclic codes using Gr\"{o}bner basis can be found in the work of Zeh and Ling~\cite{zeh}.

\subsection{Niederreiter Cryptosystem}
\noindent Let $\mathcal{H}$ be a $k \times n$ parity check matrix for a $[n,n-k]$ linear code $\mathcal{C}$ for which a fast decoding algorithm exists. Let $t$ be the number of errors that $\mathcal{C}$ can correct.  

\noindent\textbf{Private Key}: ($A_0$,$\mathcal{H}$,$B_0$) where $A_0 \in \rm{\rm{GL}}_{k}(\mathbb{F}_{2})$ and  $B_0$ is a permutation matrix of size $n$.

\noindent\textbf{Public Key}: $\mathcal{H}^{\prime}=A_0\mathcal{H}B_0$.

\noindent\textbf{Encryption}: 
\begin{description}
\item Let $\mathcal{X}$ be a $n$-bit plaintext with weight at most $t$. The corresponding ciphertext $\mathcal{Y}$ of $k$-bits is obtained by calculating $\mathcal{Y}=\mathcal{H}^\prime\mathcal{X}^{\rm{T}}$.

\end{description}
\textbf{Decryption}:
\begin{description}
\item Compute $y=A_0^{-1}\mathcal{Y}$. Thus $y = \mathcal{H}B_0\mathcal{X}^{\rm{T}}$.
\item Using Gaussian elimination find a $z$ such that  weight of $z$ is at most $t$ and $\mathcal{H}z^{\rm{T}}=y$. Since $y = \mathcal{H}B_0\mathcal{X}^{\rm{T}}$, $\mathcal{H}\left(z^{\rm{T}}-B_0\mathcal{X}^{\rm{T}}\right)=0$. Hence we have $z-\mathcal{X}B_0^{\rm{T}} \in \mathcal{C}$.
\item Now use fast decoding on $z$ with $\mathcal{H}$ to get $\mathcal{X}B_0^{\rm{T}}$ and thus recover $\mathcal{X}$.
\end{description}
For further details on \N cryptosystem the reader is referred to~\cite[Chapter 6]{NN}.
\section{Classical Attacks}
\noindent In this seciton we briefly go over the generic classical attacks against McEliece and Niederreiter cryptosystems. We also mention some attacks exploiting the circulant structures in the keys. Interestingly, Li \textit{\etal}~\cite[Section III]{Li} proved (see also~\cite[Theorem 6.4.1]{NN}) that both McElice and Niederreiter cryptosystems are equivalent in terms of classical security. The proof follows from the fact that the encryption equation for one can be reduced to the other. This implies the equivalence of security of both the cryptosystems for attacks that try to extract the plaintext from a ciphertext.
  
Most generic attacks over algebraic code based cryptosystems are \emph{information set decoding attacks}(ISD). Two most popular ways of implementing ISD attacks are by Lee and Brickell~\cite{Lee} and Stern~\cite{stern1988method}. As mentioned by Baldi \etal~\cite{Baldi} ISD attacks are the best known attacks with the least work factor as far as classical cryptanalysis is considered. Hence these work factors are considered as security levels for a McEliece and Niederreiter cryptosystems.  

The basic idea behind one of the attacks was suggested by McEliece himself. Lee and Brickell~\cite{Lee} improved the attack and added an important verification step where attacker confirms that recovered message is the correct one. In this case, we are dealing with a McEliece cryptosystem over a $[n,k]$ linear code. The strategy is based on repeatedly selecting $k$ bits at random from a $n$-bit ciphertext in hope that none of the selected bits are part of the error. Similar attacks can also be implemented over Niederreiter cryptosystems.  Lee and Brickwell also provided a closed-form equation for complexity of the attack. As our system is based on $\left( n=mp, k=(m-1)p, d_{min} = 2\mathcal{E}+1\right)$ code the expression for minimal work factor (with $\alpha = \beta = 1$ as taken by Lee and Brickell) takes the following form $$
W_{min}= W_{2}= T_{2}  \left( (m-1)^{3} p  ^{3} + (m-1) p N_{2} \right)$$ where $T_{2}= \dfrac{1}{Q_{0} + Q_{1} + Q_{2}}$ and $Q_{i}=$ $\mathcal{E} \choose i$ ${n-\mathcal{E}} \choose {k-i} $ / $n \choose k$ with $N_{2}= 1 + k + {k \choose 2}$.

In Table~\ref{table} we present numerical data for work factor for diferent values of parameters. Recently, Aylaj \textit{\etal}~\cite{Aylaj} developed an algorithm to construct stack-circulant codes with high error correcting capacity which makes the proposed Niederreiter cryptosystem much more promising.

Other ISD attacks are based on a strategy given by Stern. To recover the intentional error vector $e$ in a McEliece cryptosystem such strategies use an extension code $C^{\prime \prime}$ generated by generator matrix $M^{\prime \prime}=\left[ \begin{array}{c}
M^\prime \\
x 
\end{array} \right]$.  Bernstein \etal~\cite{lange} later improved this attack. Probability of success and work factor for Stern's attack is described in ~\cite{Hirotomo}. In Table~\ref{table} we also provide probability of success for parameters $l=16$ and $A_{w} \approx n-k$. Both the parameters can be optimized further to obtain the least work factor but not much variation is seen as we change any of these parameters. With such low probabilities, it is clear that the work factor for Stern's attack is worse than the Lee-Brickell attack. Even when one considers improvements suggested by Bernstein \textit{\etal}~\cite{lange}, Lee-Brickell's~\cite{Lee} attack seems to outperform the attack by Bernstein \textit{\etal}~as it produces speedup upto 12 times and hence the security of the system against the Lee-Brickell attack should be considered the security of the system. Key sizes should be devised according to that. 

Another attack worth mentioning for quasi-cyclic codes is the attack on the dual code. This attack works only if the dual code has really low weight codewords and is often encountered only when sparse parity check matrices are involved. For example, McEliece with QC-LDPC~\cite{Baldi}. Such attacks can easily be stopped by choosing codes that do not have low weight codewords. From the work of Aylaj \etal~\cite{Aylaj} this can be achieved.

After this discussion on classical security we now move towards quantum-security of the proposed McEliece and Niederreiter cryptosystems which is one of the major goal of this paper.
\section{Quantum Attacks}
\subsection{Hidden Subgroup Problem}
\noindent Quantum Fourier sampling works behind the scene for almost all known quantum algorithms. It is the reason Shor's algorithm for factoring and solving the discrete logarithm problem works. The main idea in this paper is to show that quantum Fourier sampling will not work in some situations, in particular, a \N cryptosystem using quasi-linear $\frac{m-1}{m}$ codes.

However, before we go there let us briefly sketch how quantum Fourier sampling is used to break a McEliece or \N cryptosystems. We recall, the scrambler-permutation attack~\cite[Section 2]{Dinh}. This structural attack is exactly same for a McEliece or a \N cryptosystem except that instead of finding a scrambler-permutation pair from generator matrix $M$ to $M^\prime$ one has to find scrambler-permutation pair from parity check matrix $\mathcal{H}$ to $\mathcal{H}^\prime$. The problem essentially remains the same. In this attack, we assume $\mathcal{H}$ and $\mathcal{H}^\prime$ are known, the attack is to find $A$ and $B$. Notice that finding  any $A^\prime$ and $B^\prime$ such that $A^\prime\mathcal{H}B^\prime=\mathcal{H}^\prime$ will also make the attack successful.

\begin{definition}[Hidden Shift Problem] Let $\mathrm{G}$ be a group. Let $f_{0}$ and $f_{1}$ be two functions from group $\mathrm{G}$ to a set $\mathrm{X}$. Given $f_{0}(g)=f_{1}(g_{0}g)$ the task is to find a constant $g_{0} \in \mathrm{G}$. Note that there can be many $g_{0}$ that satisfy the above condition. Hidden shift problem asks us to find any one of those constants.
\end{definition}

Let $M'=AMB$. A \N cryptosystem will be broken if we find one possible pair $(A,B)$ from $M$ and $M'$. Consider two functions from group $\mathrm{G}=\rm{\rm{GL}}_{k}(\mathbb{F}_{2})\times \mathrm{S}_{n} $ given by  
\begin{equation}
f_{0}(A,B)=A^{-1}MB 
\end{equation}
\begin{equation}
f_{1}(A,B)=A^{-1}M'B.
\end{equation}
Then one can check that $f_{1}(A,B)=f_{0}((A_{0}^{-1},B_{0}).(A,B))$, that is $A_{0}^{-1},B_{0}$ is the shift between $f_{0}$ and $f_{1}$. Hence, if one can solve the hidden shift problem over $\mathrm{G}=\rm{\rm{GL}}_{k}(\mathbb{F}_{2}) \times\mathrm{S}_{n}$ he can break the \N cryptosystem.
The general procedure to solve this hidden shift problem is to reduce it to a hidden subgroup problem.

\begin{definition}[Hidden Subgroup Problem] Let $\mathrm{G}$ be a group and $f$ a function\footnote{The function $f$ in the hidden subgroup problem is said to be separating cosets of $\mathrm{H}$ as $f$ is constant on a each coset and different on different cosets.} from $\mathrm{G}$ to a set $\mathrm{X}$. We know that  $f(g_{0})=f(g_{1})$ if and only if $g_{0}\mathrm{H}=g_{1}\mathrm{H}$ for some subgroup $\mathrm{H}$. The problem is, given $f$ find a generating set for the unknown subgroup $\mathrm{H}$.\end{definition} 

We can now reduce the hidden shift problem with functions $f_{0}$ and $f_{1}$ defined above on the group $\mathrm{G}=\rm{\rm{\rm{GL}}}_{k} (\mathbb{F}_{2}) \times \mathrm{S}_{n}$ to the hidden subgroup problem over $(\mathrm{G}\times \mathrm{G})\rtimes \mathbb{Z}_{2}$~\cite[Section 2.2]{Dinh}. The hidden subgroup in this case is 
\begin{equation}\label{eqnK}
\mathrm{K}=(((\mathrm{H}_{0},s^{-1}\mathrm{H}_{0}s),0) \cup ((\mathrm{H}_{0}s,s^{-1}\mathrm{H}_{0}),1)),
\end{equation} where $\mathrm{H}_{0}=\lbrace (A,P)\in \rm{\rm{GL}}_{k} (\mathbb{F}_{2}) \times S_{n} : A^{-1}MP=M  \rbrace$ and $s$ is a shift from $f_{0}$ to $f_{1}$.

In short, the scrambler-permutation problem is one of the key ways to attack a \N cryptosystem. This problem can be formulated as a hidden shift problem which further can be reduced to a hidden subgroup problem. So we can attack \N cryptosystems by trying to solve a hidden subgroup problem over $(\mathrm{G}\times \mathrm{G})\rtimes \mathbb{Z}_{2}$ where
$\mathrm{G} =\rm{\rm{GL}}_{k} (\mathbb{F}_{2}) \times S_{n}$.

\subsection{Successful Quantum Attacks}
\noindent In the previous section we saw that solving the hidden subgroup problem as a standard way to attack a \N cryptosystem. An interesting question is, when is the hidden subgroup problem hard to solve? This way we can ensure the security of a \N cryptosystem against known quantum attacks.

We briefly sketch some thoughts behind effectiveness of QFS. The algorithm of QFS in a general scenario and its use for solving the hidden subgroup problem is
 very well explained by Grigni \etal~\cite{Vazirani}. Arguments particular to \N cryptosystems and corresponding hidden subgroup problem are described in
  details by Dinh \etal~\cite[Section 3]{Dinh}. The standard model of QFS yields a probability distribution as a function of the hidden subgroup. The basic
   idea behind \emph{indistinguisability} of two subgroups $\mathrm{H}_{1}$ and $\mathrm{H}_{2}$ with probability distributions $\mathsf{P}_{\mathrm{H}_{1}}$
    and $\mathsf{P}_{\mathrm{H}_{2}}$ is that  $\mathsf{P}_{\mathrm{H}_{1}}$ and $\mathsf{P}_{\mathrm{H}_{2}}$ are \emph{very close}. For the purpose of
     defining closeness we need to define a metric on the space of probability distributions. In this case, the metric chosen is the total variation distance between two distributions. This follows from the work of Kempe and Shalev~\cite{Kempe}. Later Dinh \etal~\cite{Dinh} used the $\mathcal{L}_1$ distance to
      define distinguishibility. Furthermore, the probability distribution for the hidden subgroup problem with the identity subgroup gives us the uniform distribution~\cite[Section 3.2]{Dinh}. When that is the case, QFS will not give us much information to solve the hidden subgroup problem. Kempe and Shalev~\cite{Kempe} provided a necessary condition to distinguish a subgroup of $\mathrm{S}_{n}$ from the trivial subgroup $\langle e \rangle$.
       Later Dinh \etal~\cite{Dinh} extended this result while keeping the group relevant to a \N cryptosystem in mind. Their result can be viewed as a study of the hidden subgroup problem for the  group $\mathrm{G}=(\rm{\rm{GL}}_{k}(\mathbb{F}_{2})\times \mathrm{S}_{n})^{2} \rtimes \mathbb{Z}_{2}$ which is the
        group for \N cryptosystems. They demonstrated a case when the hidden subgroup $\mathrm{H}$ can not be distinguished from either its conjugate subgroups $g\mathrm{H}g^{-1}$ or the trivial subgroup $\langle e \rangle$ and proved a general result on a sufficient condition for
         indistinguishibility~\cite[Theorem 4]{Dinh}. We use their result to prove that the proposed Niederreiter cryptosystem is quantum-secure.

First  note that weak Fourier sampling gives the same distributions for all the conjugate subgroups, \ie, $\mathsf{P}_{\mathrm{H}}$ is the same as $\mathsf{P}_{g\mathrm{H}g^{-1}}$. Hence weak Fourier sampling can not differentiate a subgroup from its conjugate subgroup. Thus it suffices to look at strong Fourier sampling. Dinh \etal~\cite{Dinh}, inspired by the work of Kempe and Shalev~\cite{Kempe} defines distinguishability of a subgroup $\mathrm{H}$ by strong Fourier sampling.

\begin{definition}[Distinguishability of a subgroup by strong QFS] 
We define distinguishability of a subgroup $\mathrm{H}$ of a group $\mathrm{G}$, denote it by $\mathrm{D}_\mathrm{H}$, to be the expectation of the squared $\mathcal{L}_1$ distance between $\mathsf{P}_{g\mathrm{H}g^{-1}}$ and the uniform distribution, where $g\in\mathrm{G}$. In other words,
\[\mathrm{D}_{\mathrm{H}} := \mathbf{E}_{\rho ,g} \left[\| \mathsf{P}_{g\mathrm{H}g^{-1}} (\cdot| \rho )-\mathsf{P}_{\langle e \rangle} (\cdot| \rho )\|^2_1\right].\]
A subgroup $\mathrm{H}$ is called \emph{indistinguishable} by strong Fourier sampling if $\mathrm{D}_{\mathrm{H}} \leq log^{- \omega(1)} \vert \mathrm{G} \vert$. The $\rho$ above belongs to the set of irreducible complex representations of the group $\mathrm{G}$.
\end{definition} 

Note that if a subgroup $\mathrm{H}$ is indistinguishable according to this definition then by Markov's inequality, for all $c>0$, $\Vert \mathsf{P}_{g\mathrm{H}g^{-1}} (\cdot| \rho ) - \mathsf{P}_{\lbrace e \rbrace} (\cdot| \rho )\Vert _{t.v.} \leq log^{-c} \vert \mathrm{G} \vert $ which is analogous to the definition provided by Kempe and Shalev~\cite{Kempe} for indistinguishability of a subgroup by weak Fourier sampling.
 
We now define the minimal degree of a permutation group, the automorphism group of a matrix (as defined by Dinh~\cite[Section 4.2]{Dinh}) and recall the definition of $T_M$ for a $k\times n$ matrix $M$.

\begin{definition}[Automorphism Group] The automorphism group of $M$ is defined as
$\rm{Aut}(M)=\lbrace P\in \mathrm{S}_{n}$ such that  there exists $ A \in \rm{\rm{\rm{GL}}}_{k}(\mathbb{F}_{2}) ,\ AMP=M\rbrace $.
 \end{definition}
\begin{definition}[Minimal Degree] The minimal degree of a permutation group $\mathrm{G} \leqslant \mathrm{S}_{n}$ acting on set of $n$ symbols is defined to be minimum number of elements moved by a non-identity element of the group $\mathrm{G}$.
\end{definition}
\begin{definition}Consider a $k\times n$ matrix $M$, we define $T_M$ for the matrix $M= \left[\,I_{k} | M^{*} \right]$ as 
$T_M=\lbrace \mathcal{P}_{1} \in \mathrm{S}_{k}$ such that there exists $\mathcal{P}_{2}\in \mathrm{S}_{n-k}$ with $\mathcal{P}_{1} M^{*} \mathcal{P}_{2} =M^{*}\rbrace$.
\end{definition}
\noindent We will use the following theorem which we state for the convinience of the reader.
\begin{theorem}[Dinh \etal~{\cite[Theorem 4]{Dinh}}]\label{thm1}
 Assume $q^{k^2} \leqslant n^{an}$ for some constant $0 < a < 1/4$. Let m be the minimal degree of the automorphism group Aut(M). Then for sufficiently large n, the subgroup K, $D_{K} \leqslant O(\vert K \vert ^ 2 e^{-\delta m} ) $, where $\delta > 0$ is a constant.
\end{theorem}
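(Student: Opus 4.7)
The plan is to expand $\mathrm{D}_{K}$ into a sum over the irreducible representations $\rho$ of $\mathcal{G}=(\mathrm{GL}_k(\mathbb{F}_q)\times \mathrm{S}_n)^{2}\rtimes \mathbb{Z}_2$, use Clifford theory to factor each $\rho$ through tensors of irreducibles of the underlying $\mathrm{GL}_k(\mathbb{F}_q)\times \mathrm{S}_n$, and finally reduce the whole bound to character estimates on the symmetric-group factor of $H_0$ where the minimal-degree hypothesis on $\mathrm{Aut}(M)$ does the work.

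First I would write out the standard formula for strong Fourier sampling. For each irreducible $\rho$ of $\mathcal{G}$ of dimension $d_\rho$, a short Plancherel-type manipulation identical to the one in Kempe--Shalev~\cite{Kempe} bounds $\|\mathsf{P}_{g K g^{-1}}(\cdot\mid\rho)-\mathsf{P}_{\langle e\rangle}(\cdot\mid\rho)\|_1^{2}$ by a constant multiple of $d_\rho^{-1}\bigl\|\sum_{k\in K\setminus\{e\}}\rho(k)\bigr\|_{F}^{2}$. Averaging over $\rho$ with Plancherel weight $d_\rho^{2}/|\mathcal{G}|$ and over a uniformly random conjugator $g$, distinguishability is dominated by
\begin{equation*}
\mathrm{D}_{K} \;\leq\; \frac{C}{|\mathcal{G}|}\sum_{\rho} d_\rho\,\Bigl\|\sum_{k\in K\setminus\{e\}}\rho(k)\Bigr\|_{F}^{2},
\end{equation*}
so the entire problem collapses to character sums over $K$.

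Second I would classify the irreducibles and substitute. Because $\mathcal{G}$ is a $\mathbb{Z}_2$-extension of $(\mathrm{GL}_k(\mathbb{F}_q)\times \mathrm{S}_n)^{2}$, Clifford theory makes every irreducible $\rho$ either extend or be induced from a tensor $\pi_1\otimes\pi_2$, and each $\pi_i$ further factors as $\mu\otimes\lambda$ with $\mu\in\widehat{\mathrm{GL}_k(\mathbb{F}_q)}$ and $\lambda\in\widehat{\mathrm{S}_n}$. By the definition of $K$ in equation~(\ref{eqnK}), every non-identity element contributing to $\sum_{k\in K\setminus\{e\}}\rho(k)$ has the form $(A,P)\in H_0\setminus\{e\}$ (possibly twisted by the shift $s$), with $P$ a non-trivial element of $\mathrm{Aut}(M)$; by hypothesis this $P$ moves at least $m$ points. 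The resulting character contributions are therefore products $\chi_\mu(A)\chi_\lambda(P)$ with $P$ of minimal degree at least $m$.

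Third I would import two character estimates and combine them. On the symmetric-group side, the Larsen--Shalev Roichman-type bound supplies $|\chi_\lambda(P)|/d_\lambda \leq e^{-\delta m}$ uniformly in $\lambda$ for every $P$ moving at least $m$ points, and this is what delivers the exponential factor in the conclusion. On the $\mathrm{GL}_k(\mathbb{F}_q)$ side I would use only the crude bound $|\chi_\mu(A)|\leq d_\mu$ together with the Plancherel identity $\sum_\mu d_\mu^{2}=|\mathrm{GL}_k(\mathbb{F}_q)|\leq q^{k^{2}}\leq n^{an}$. The hypothesis $a<1/4$ is designed precisely so that the polynomial loss absorbed from $\mathrm{GL}_k(\mathbb{F}_q)$ is strictly smaller than the exponential gain $e^{-\delta m}$ coming from the $\mathrm{S}_n$ factor. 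Re-weighting by $d_\rho/|\mathcal{G}|$ and collecting pieces yields $\mathrm{D}_{K}\leq O(|K|^{2}e^{-\delta m})$.

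The main obstacle will be making the symmetric-group character ratio $|\chi_\lambda(P)|/d_\lambda$ uniform across all partitions $\lambda$: Roichman-type bounds are crisp for generic $\lambda$ but degrade for either very low-dimensional or very near-rectangular $\lambda$, so one must split the sum over $\lambda$ by shape and handle the small-dimensional corner with separate combinatorial tableau estimates. Reconciling the two regimes while preserving the $|K|^{2}$ scaling on the right-hand side, and simultaneously absorbing the $q^{k^{2}}\leq n^{an}$ blow-up from the $\mathrm{GL}_k$ side into the exponent using $a<1/4$, is the delicate part; everything else is bookkeeping.
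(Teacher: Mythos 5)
Before anything else: the paper does not prove this statement. Theorem~\ref{thm1} is quoted verbatim from Dinh \etal~\cite[Theorem 4]{Dinh} as an imported black box (``we state [it] for the convenience of the reader''), so there is no in-paper proof to compare your attempt against. Measured instead against the original argument in \cite{Dinh}, your sketch has the right architecture: reduce $\mathrm{D}_K$ to character sums over $K\setminus\{e\}$, note that every non-identity element of $K$ carries a permutation component moving at least $m$ points, kill the $\mathrm{S}_n$ contribution with a Roichman-type character-ratio bound, and absorb the $\mathrm{GL}_k(\mathbb{F}_q)$ contribution via $q^{k^2}\le n^{an}$. You also correctly flag the genuinely delicate point, namely the partitions $\lambda$ with $\lambda_1$ or $\lambda_1'$ close to $n$, where the character-ratio bound degrades and a separate dimension-counting argument is needed.

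Two steps, however, are glossed over in a way that would not survive being written out. First, your opening move --- bounding $\|\mathsf{P}_{gKg^{-1}}(\cdot\mid\rho)-\mathsf{P}_{\langle e\rangle}(\cdot\mid\rho)\|_1^2$ by ``a short Plancherel-type manipulation identical to the one in Kempe--Shalev'' --- conflates weak and strong Fourier sampling. Kempe and Shalev's computation applies to weak sampling, where one measures only the representation name; $\mathrm{D}_K$ here is defined via \emph{strong} sampling, where the adversary may measure in an arbitrary orthonormal basis inside each isotypic block, and controlling that requires the projection-operator machinery of Moore--Russell--Schulman (packaged as Theorem 3 of \cite{Dinh}). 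That step is the technical heart of the theorem and cannot be replaced by a character identity. Second, the role of $a<1/4$ is not that the $n^{an}$ loss is ``smaller than the exponential gain $e^{-\delta m}$'' --- note $m$ can be much smaller than $n\log n$, so that comparison can fail; rather, the squared blow-up of order $n^{2an}$ coming from the two $\mathrm{GL}_k(\mathbb{F}_q)$ factors must be beaten by the superexponentially small Plancherel weight of the near-one-row and near-one-column $\mathrm{S}_n$ representations, and that is where the threshold $1/4$ actually enters. With those two repairs your outline tracks the proof of Dinh \etal; as written it is a plausible skeleton rather than a proof.
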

\section{Proposed Cryptosystem}\label{variant}
\noindent In this section we explain the proposed Niederreiter cryptosystem and establish its quantum security against the hidden subgroup attack. 

Recall that our variant of the Niederreiter cryptosystem consists of a parity check matrix $\mathcal{H}$ defined as follows:
The matrix $\mathcal{H}$ is an array of circulants in the systematic form\footnote{A systematic matrix is a matrix whose first block is the identity.}, that is, $\mathcal{H}=\left[\,C_{0}=I\,|\,C_{1}\,|\,C_{2}\,|\,\ldots\,|\,C_{m-1} \,\right]$ where each $C_{i}$ is a circulant matrix of size $p$ (a prime) over $\mathbb{F}_{2^{l}}$. For simplicity let us denote $\left[\,C_{1}\,|\,C_{2}\,|\,\ldots\,|\,C_{m-1}\, \right]$ as $C$ so that $\mathcal{H}= \left[\,I\,|\,C \,\right]$. Recall the conditions of the parity check matrix $\mathcal{H}$ from Section~\ref{dd}. A parity check matrix $\mathcal{H}$ satisfying these conditions is easy to construct. We present a way to do so in the next section. Before that, we prove security of the proposed \N cryptosystem against quantum attacks. 
\subsection{Proof of Indistinguishability}
\noindent We prove indistinguishibility in a sequence of lemmas.
\begin{lemma} \label{ov1} Let $P \in \rm{Aut}(\mathcal{H})$ then $P= P_{1} \oplus P_{2}$ where $P_{1}$ is a block of size $p$ and $P_{2}$ is a block of size $(m-1)p$ and $P_1\oplus P_2$ is a block diagonal matrix of size $mp\times mp$ with the top block $P_1$ and the bottom block $P_2$.
\end{lemma}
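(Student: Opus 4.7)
The plan is to decompose $P$ as a block matrix aligned with the systematic form $\mathcal{H} = [\,I\,|\,C\,]$: write
\[
P = \begin{pmatrix} P_{11} & P_{12} \\ P_{21} & P_{22} \end{pmatrix}
\]
with $P_{11}$ of size $p\times p$ and $P_{22}$ of size $(m-1)p\times (m-1)p$. The lemma is equivalent to showing that the off-diagonal blocks $P_{12}$ and $P_{21}$ both vanish.

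First I would expand $A\mathcal{H}P = \mathcal{H}$ block-wise to obtain the pair of identities
\[
A(P_{11} + CP_{21}) = I, \qquad A(P_{12} + CP_{22}) = C.
\]
Since $P$ is a $\{0,1\}$-permutation matrix, each of its columns carries its unique $1$ either in the top $p$ rows (contributing to $P_{11}$ or $P_{12}$) or in the bottom $(m-1)p$ rows (contributing to $P_{21}$ or $P_{22}$). Consequently each column of $P_{11} + CP_{21}$ is either a standard basis vector $e_i$ or exactly a column $c_k$ of $C$, and the same holds for $P_{12} + CP_{22}$. Hence every column of $A^{-1}$ and of $A^{-1}C$ lies in
\[
\mathcal{S} := \{e_1,\ldots,e_p\}\cup\{c_1,\ldots,c_{(m-1)p}\}.
\]
Conditions III and V guarantee that the $mp$ elements of $\mathcal{S}$ are pairwise distinct and that $\mathcal{S}$ partitions into an $e$-part (inside $\mathbb{F}_2^p$) and a $c$-part, each vector of which carries an entry in $\mathbb{F}_{2^l}\setminus \mathbb{F}_2$ because each $C_i$ is circulant and contains such an element.

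I would then argue by contradiction: suppose $P_{21}\neq 0$. Then some column of $A^{-1}$ equals a specific $c_{k_0}$, which translates into $Ac_{k_0} = e_{j_0}$ for appropriate indices. Expanding $\sum_{l=1}^{p} A_{\cdot,l}\,(c_{k_0})_l = e_{j_0}$ and using that each column of $A$ itself lies in $\mathcal{S}$, I would split the resulting identity into its $e$-contribution and its $c$-contribution. The $c$-contribution then gives a non-trivial $\mathbb{F}_{2^l}$-linear combination of distinct columns of $C$ that must balance an $\mathbb{F}_2$-supported remainder, which tightly constrains the $j_0$-th row of $C$ to be $\{0,1\}$-valued with a very specific support pattern.

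The main obstacle, and the step requiring the most care, is promoting this row constraint into an actual impossibility. The key observation is that if any row of some block $C_t$ ($t\geq 1$) were forced into $\mathbb{F}_2$, the circulant structure would cascade this to every row of $C_t$, leaving $C_t$ with no entry in $\mathbb{F}_{2^l}\setminus \mathbb{F}_2$ and thus contradicting condition III; alternatively, any identification of two columns of $C$ arising from the relation contradicts condition V. Once $P_{21}=0$ is established, the same argument applied to $A(P_{12}+CP_{22})=C$ forces $P_{12}=0$, which yields $P = P_{11}\oplus P_{22}$ as claimed.
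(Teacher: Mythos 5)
Your setup---the block decomposition of $P$, the identities $P_{11}+CP_{21}=A^{-1}$ and $P_{12}+CP_{22}=A^{-1}C$, and the observation that every column of $A^{-1}$ and of $A^{-1}C$ must therefore lie in $\mathcal{S}=\{e_1,\dots,e_p\}\cup\{c_1,\dots,c_{(m-1)p}\}$---is correct and is essentially the same column-matching observation the paper starts from. The gap is in how you rule out $P_{21}\neq 0$. First, the step ``using that each column of $A$ itself lies in $\mathcal{S}$'' is unjustified: what you established is that the columns of $A^{-1}$ lie in $\mathcal{S}$, not the columns of $A$, and nothing in your argument transfers that property across the inverse. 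Second, the promised contradiction is never actually reached: the claim that the relation $Ac_{k_0}=e_{j_0}$ ``tightly constrains the $j_0$-th row of $C$ to be $\{0,1\}$-valued'' is asserted rather than derived, and the circulant ``cascade'' remark (true in itself: if one row of a circulant block lies in $\mathbb{F}_2^p$ then all rows do) is never connected to a completed impossibility. As written, the proof does not close.

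The missing idea is much simpler and is exactly what the paper uses. The scrambler $A$ lies in $\mathrm{GL}_p(\mathbb{F}_2)$, so $A$ (and $A^{-1}$) has all entries in the base field $\mathbb{F}_2$; on the other hand, condition III together with the circulant structure forces \emph{every} column of every $C_i$ with $i\geq 1$ to contain an entry of $\mathbb{F}_{2^l}\setminus\mathbb{F}_2$, since a circulant block cycles each entry of its first row through every column. Hence no column of $A^{-1}$ can equal any $c_k$, which kills $P_{21}\neq 0$ immediately; the $p$ distinct columns of the invertible matrix $A^{-1}$ then exhaust $\{e_1,\dots,e_p\}$, so by counting no column of $A^{-1}C$ is a standard basis vector and $P_{12}=0$ as well. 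Note also that condition V plays no role in this lemma (the paper uses it only later, to show that each $P_1$ determines at most one $P_2$); the hypotheses doing the work here are that the scrambler is over the base field and that each $C_i$ contains an extension-field element.
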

\begin{proof} Let $P \in \rm{Aut}(\mathcal{H})$, from the definition of automorphism there is an $A$ such that $A\mathcal{H}P=\mathcal{H}$. 
This implies that
\[A \left[\,I\,|\,C\,\right] P = \left[\,A\,|\,AC \,\right]P= \left[\,I\,|\,C\, \right].\] 

As action of right multiplication by a permutation matrix permute columns, the above equality shows that $\left[\,A\,|\,AC\,\right]$ has same columns as $\left[\,I\,|\,C \,\right]$ possibly in different order. Now since every column of $C$ contains an entry from a proper extension of $\mathbb{F}_q$, no column of $A$ can be column of $C$. This forces $A$ to have same columns as $I$ and $AC$ to have same columns as that of $C$. Hence $P$ permutes first $p$ columns within themselves and last $(m-1)p$ columns in themselves. Hence every $P \in \rm{Aut}(\mathcal{H})$ can be broken into $P_{1} \oplus P_{2}$ so that $P_{1}$ acts on $I$ and $P_{2}$ acts on $C$. 
\end{proof} 
\noindent The next lemma is central to quantum-security. It gives us a way to move from $\mathcal{H}$ to $C$ by noting, the $P_1$ from the $P\in\rm{Aut}(\mathcal{H})$ is actually a member of $T_{\mathcal{H}}$.
\begin{lemma} \label{ov3} The cardinality of $\rm{Aut}(\mathcal{H})$ is the cardinality of the set $\left\{(P_{1},P_{2})\right\}$ that satisfy $P_{1}C P_{2}=C$ where $\mathcal{H}=\left[\,I\,|\,C\,\right]$ as defined earlier.
\end{lemma}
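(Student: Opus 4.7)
The plan is to produce an explicit bijection between $\rm{Aut}(\mathcal{H})$ and the set $S = \{(Q_1, Q_2) : Q_1 \in \mathrm{S}_p,\ Q_2 \in \mathrm{S}_{(m-1)p},\ Q_1 C Q_2 = C\}$. Since Lemma~\ref{ov1} already forces every automorphism to be block-diagonal with respect to the $I$- and $C$-blocks of $\mathcal{H}$, what remains is essentially bookkeeping with permutations and their inverses.

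For the forward direction, I would take $P \in \rm{Aut}(\mathcal{H})$ together with a witness $A \in \rm{GL}_p(\mathbb{F}_{2^l})$ satisfying $A\mathcal{H}P = \mathcal{H}$, and invoke Lemma~\ref{ov1} to decompose $P = P_1 \oplus P_2$ uniquely. Expanding $A\,[\,I\,|\,C\,]\,(P_1 \oplus P_2) = [\,AP_1\,|\,ACP_2\,] = [\,I\,|\,C\,]$ yields the two equations $AP_1 = I$ and $ACP_2 = C$. The first forces $A = P_1^{-1}$ (so $A$ is itself a permutation, recovering Corollary~\ref{ov2}), and substituting this into the second gives $P_1^{-1} C P_2 = C$. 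I then define the map $P \mapsto (P_1^{-1}, P_2) \in S$; it is well-defined because both the block decomposition from Lemma~\ref{ov1} and the witness $A$ (once $P_1$ is fixed) are unique, and it is injective because $P$ is recovered as $(P_1^{-1})^{-1} \oplus P_2$ from its image.

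For the reverse direction, given $(Q_1, Q_2) \in S$ I would set $P := Q_1^{-1} \oplus Q_2$ and $A := Q_1$, and observe directly that $A\mathcal{H}P = Q_1\,[\,I\,|\,C\,]\,(Q_1^{-1} \oplus Q_2) = [\,Q_1 Q_1^{-1}\,|\,Q_1 C Q_2\,] = [\,I\,|\,C\,] = \mathcal{H}$, placing $P$ in $\rm{Aut}(\mathcal{H})$. This construction is by design the inverse of the forward map, which completes the bijection and hence the claimed equality of cardinalities.

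The only subtlety, and the main thing to watch, is keeping straight which of $P_1$ or $P_1^{-1}$ is the object being counted: the $P_1$ appearing in the block decomposition of Lemma~\ref{ov1} corresponds to $Q_1^{-1}$ in the pair recorded by $S$. This is a purely notational matter; because Lemma~\ref{ov1} has already ruled out automorphisms that mix the identity block with $C$, no further structural argument is needed and no genuine mathematical obstacle remains.
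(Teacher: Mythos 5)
Your proposal is correct and follows essentially the same route as the paper: decompose $P=P_1\oplus P_2$ via Lemma~\ref{ov1}, expand $A\,[\,I\,|\,C\,](P_1\oplus P_2)=[\,I\,|\,C\,]$ to get $A=P_1^{-1}$ and $P_1^{-1}CP_2=C$. You are in fact slightly more careful than the paper, which only carries out this forward computation and leaves the inverse map and injectivity implicit, whereas you spell out the bijection in both directions.
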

\begin{proof} 
The proof follows from the fact, if $P$ belongs to $\rm{Aut}(\mathcal{H})$, then $P=P_1\oplus P_2$. Then $A\left[\,I\,|\,C\,\right]P=\left[\,I\,|\,C\,\right]$ translates into $A\left[\,I\,|\,C\,\right](P_1\oplus P_2)=\left[\,I\,|\,C\,\right]$.  Keeping in mind the block diagonal nature of $P$, it follows that $\left[\,AIP_1\,|\,ACP_2\,\right]=\left[\,I\,|\,C\,\right]$. Then $A=P_1^{-1}$ and $P_1^{-1}CP_2=C$. This proves the lemma.
\end{proof}
\noindent The next lemma proves that for each $P_1$ there is atmost one $P_2$.
\begin{lemma}\label{ov4} Cardinality of the set $\lbrace\left(P_{1},P_{2} \right)$ that satisfy $P_{1}C {P}_{2}=C\rbrace$ equals $\vert T_{\mathcal{H}} \vert$.  
\end{lemma}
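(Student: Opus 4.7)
The plan is to exhibit a bijection between the pair-set $\{(P_1,P_2) : P_1 C P_2 = C\}$ and the projection $T_{\mathcal{H}}$. The obvious map $(P_1,P_2) \mapsto P_1$ is surjective by the very definition of $T_{\mathcal{H}}$, so the entire content of the lemma is that this projection is also injective; equivalently, that for each $P_1 \in T_{\mathcal{H}}$ the companion $P_2$ is uniquely determined.

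To establish uniqueness, I would suppose there are two witnesses $P_2, P_2' \in \mathrm{S}_{p(m-1)}$ with $P_1 C P_2 = C$ and $P_1 C P_2' = C$. Since $P_1$ is a permutation matrix and therefore invertible, left-cancellation yields $C P_2 = C P_2'$, hence $C (P_2' P_2^{-1})^{-1} = C$, or more cleanly $C Q = C$ where $Q := P_2 (P_2')^{-1}$. So the question reduces to identifying the right-stabiliser $\mathrm{Stab}_R(C) := \{Q \in \mathrm{S}_{p(m-1)} : CQ = C\}$ inside the symmetric group and showing it is trivial.

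The key observation is that right multiplication of $C$ by a permutation matrix $Q$ simply permutes the columns of $C$. Thus $CQ = C$ forces $Q$ to send every column of $C$ to a column of $C$ with identical entries. This is precisely where condition V of Section~\ref{dd} does the work: since no two columns of $C$ coincide, any such $Q$ must fix every column, and therefore $Q$ is the identity. Consequently $P_2 = P_2'$, injectivity holds, and $|\{(P_1,P_2) : P_1 C P_2 = C\}| = |T_{\mathcal{H}}|$.

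I do not foresee a real obstacle here; the argument is essentially bookkeeping once one recognises that the whole lemma is a uniqueness-of-$P_2$ statement dressed up as a counting equality, and that the distinct-columns hypothesis V was inserted into the specification of $\mathcal{H}$ precisely to make this uniqueness work. The only point that deserves a second look is confirming that $Q = P_2 (P_2')^{-1}$ indeed lives in $\mathrm{S}_{p(m-1)}$ and acts on the column index set of $C$ in the expected way, which is immediate from the fact that $\mathrm{S}_{p(m-1)}$ is closed under inverses and products.
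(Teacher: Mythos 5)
Your argument is correct and matches the paper's proof in substance: both reduce the lemma to the uniqueness of $P_2$ for a given $P_1$ and derive that uniqueness from condition V (no two columns of $C$ are identical), the paper phrasing it as ``at most one way to re-order the columns of $P_1C$ to get back $C$'' while you phrase it via triviality of the right-stabiliser of $C$. The extra cancellation step you add is a harmless formalisation of the same idea.
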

\begin{proof} Recall that $T_{\mathcal{H}} =  \lbrace P_{1}$ that satisfy $P_{1}CP_{2}=C \rbrace$. So it suffices to show that for every $P_{1}$ there is at most one $P_{2}$. Since no two columns of $C$ are identical, no two columns of $P_{1}C$ are identical. Hence, there is at most one way to re-order them to get back $C$. Thus for every $P_{1}$ there is at most one $P_{2}$.
\end{proof}
\begin{theorem}[Burnside~{\cite[Theorem 3.5B]{Dixon}}] \label{burni}
Let $G$ be a subgroup of $Sym(\mathbb{F}_{p})$ containing a $p$-cycle $\mu : \xi \mapsto \xi+1$. Then $G$ is either 2-transitive or $G \leq A\rm{\rm{GL}}_{1}(\mathbb{F}_{p})$ where $A\rm{\rm{GL}}_{1}(\mathbb{F}_{p})$ is the affine group over $p$.
 \end{theorem}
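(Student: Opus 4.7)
The plan is to study the Sylow $p$-subgroup $P := \langle \mu \rangle$ of $G$ and split the argument into two cases according to whether $P$ is normal in $G$.

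First I would observe that the largest power of $p$ dividing $|\mathrm{S}_{p}| = p!$ is $p^{1}$ (by Legendre's formula), so $P$ of order $p$ is already a full Sylow $p$-subgroup of $G$. The next preliminary step is to identify $N_{\mathrm{S}_{p}}(P)$ with $A\mathrm{GL}_{1}(\mathbb{F}_{p})$: any $\sigma \in \mathrm{S}_{p}$ normalising $P$ satisfies $\sigma\mu\sigma^{-1} = \mu^{a}$ for some $a \in \mathbb{F}_{p}^{*}$, and, labelling points by $\mathbb{F}_{p}$ so that $\mu$ acts as $\xi \mapsto \xi+1$, this forces $\sigma$ to have the affine form $\xi \mapsto a\xi + b$. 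A counting check gives $|N_{\mathrm{S}_{p}}(P)| = p(p-1) = |A\mathrm{GL}_{1}(\mathbb{F}_{p})|$, so the two groups coincide.

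The easy case is $P \triangleleft G$: here $G \leq N_{G}(P) \leq N_{\mathrm{S}_{p}}(P) = A\mathrm{GL}_{1}(\mathbb{F}_{p})$ and the second alternative in the statement holds.

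The hard case is when $P$ is not normal in $G$, and this is where the content of the theorem lies. By Sylow's theorem the number of conjugates of $P$ is $\equiv 1 \pmod p$ and strictly greater than $1$, hence at least $p+1$, so $|G| \geq p(p+1)$. I would first establish that $G$ is primitive: since $p$ is prime, the size of any block of imprimitivity must divide $p$ and so is $1$ or $p$, ruling out a non-trivial block system. The remaining step---upgrading primitivity plus the existence of a $p$-cycle with non-normal Sylow $p$-subgroup to full 2-transitivity---is the substantive content of Burnside's theorem and the main obstacle. I would attack it via character theory: writing the permutation character as $\pi = 1_{G} + \chi$ with $\deg \chi = p-1$, one has $\langle \pi, \pi \rangle$ equal to the number of $G$-orbits on $\Omega \times \Omega$, so 2-transitivity is equivalent to irreducibility of $\chi$. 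The constraints from the Sylow count (multiple conjugates of $P$), together with Clifford-type restrictions on the degrees of the irreducible constituents of $\chi$ in terms of $p$, would be used to rule out any non-trivial decomposition of $\chi$ and thereby force irreducibility. Deducing this irreducibility from the purely numerical Sylow data is exactly where Burnside's original argument does its real work, and it is the step I expect to be the genuine difficulty.
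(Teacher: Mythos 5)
The paper does not prove this statement at all: it is quoted verbatim as Theorem 3.5B of Dixon and Mortimer and used as a black box, so there is no internal proof to compare yours against. Judged on its own terms, your proposal has a genuine gap. The preliminary reductions are all correct: $\langle\mu\rangle$ is a full Sylow $p$-subgroup of $\mathrm{S}_p$, its normaliser in $\mathrm{S}_p$ is $A\mathrm{GL}_1(\mathbb{F}_p)$ of order $p(p-1)$, the normal-Sylow case immediately gives $G\leq A\mathrm{GL}_1(\mathbb{F}_p)$, a transitive group of prime degree is primitive, and $2$-transitivity is equivalent to irreducibility of $\chi$ where $\pi=1_G+\chi$. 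But the entire content of Burnside's theorem is the implication that a non-normal Sylow $p$-subgroup forces $\chi$ to be irreducible, and at exactly that point you stop, saying you expect it to be the genuine difficulty. That is an accurate self-assessment, not a proof.

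Moreover, the tool you gesture at --- ``Clifford-type restrictions'' --- points in the wrong direction: Clifford theory concerns restriction to \emph{normal} subgroups, and in this branch of the argument $P=\langle\mu\rangle$ is precisely not normal. The classical character-theoretic argument instead restricts each nontrivial irreducible constituent $\theta$ of $\pi$ to $P$, observes that $\theta|_P$ contains no copy of the trivial character of $P$ (since $P$ has a single orbit, $\langle\pi|_P,1_P\rangle=1$ is already exhausted by the constituent $1_G$, so $\chi|_P$ is a sum of the $p-1$ nontrivial linear characters of $P$), and then uses the action of $\mathrm{Gal}(\mathbb{Q}(\zeta_p)/\mathbb{Q})$ on those characters together with the degree constraint $\sum_i\theta_i(1)=p-1$ to force a single constituent; Dixon and Mortimer themselves give a different, module-theoretic proof. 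Since the paper only needs the statement as an imported result, the honest options are either to cite it as the paper does or to carry out one of these arguments in full --- the outline as written does neither.
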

\noindent We prove a theorem on the size of the automorphism group of $\mathcal{H}$. 
\begin{theorem}
If $\mathcal{H}$ satisfies conditions I,II and III then $\vert \rm{Aut}(\mathcal{H}) \vert \leqslant p(p-1)$.
\end{theorem}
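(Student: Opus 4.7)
The plan is to reduce the size of $\rm{Aut}(\mathcal{H})$ to an analysis of $T_{\mathcal{H}}$ as a permutation group on $p$ points, and then apply Burnside's theorem to this group using the circulant structure.

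First, by the preceding Lemmas (\ref{ov3}) and (\ref{ov4}), we have $\lvert\rm{Aut}(\mathcal{H})\rvert=\lvert T_{\mathcal{H}}\rvert$. So the entire task becomes bounding $\lvert T_{\mathcal{H}}\rvert$, viewed as a subgroup of $\mathrm{S}_p$.

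Second, I would exhibit an explicit $p$-cycle inside $T_{\mathcal{H}}$ coming from the circulant structure. Let $\mu\in\mathrm{S}_p$ be the cyclic-shift permutation $\xi\mapsto\xi+1$. The defining property of a circulant matrix is that, as a permutation matrix, $\mu$ commutes with every $C_i$, i.e.\ $\mu C_i=C_i\mu$ for $i=1,\ldots,m-1$. Setting $P_2$ to be the block-diagonal matrix $\mu^{-1}\oplus\mu^{-1}\oplus\cdots\oplus\mu^{-1}$ ($m-1$ copies) acting on the columns of $C=[\,C_1\,|\,\cdots\,|\,C_{m-1}\,]$, a direct block-by-block computation gives $\mu C P_2=C$. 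Hence $\mu\in T_{\mathcal{H}}$, and $\mu$ is a $p$-cycle.

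Third, with a $p$-cycle in hand, Burnside's Theorem (Theorem~\ref{burni}) applies to $T_{\mathcal{H}}\leq \mathrm{S}_p$: either $T_{\mathcal{H}}$ is $2$-transitive, or $T_{\mathcal{H}}\leq A\rm{\rm{GL}}_1(\mathbb{F}_p)$, in which case $\lvert T_{\mathcal{H}}\rvert\leq\lvert A\rm{\rm{GL}}_1(\mathbb{F}_p)\rvert=p(p-1)$, which is exactly the claimed bound.

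The main obstacle, and the place I expect the argument to be a bit delicate, is ruling out the $2$-transitive alternative produced by Burnside. Conditions I, II, III alone do not exclude $T_{\mathcal{H}}$ being, say, $A_p$ or $S_p$, which would blow the $p(p-1)$ bound. This is exactly why condition IV was introduced in Section~\ref{dd}, and I would invoke it explicitly at this step (noting that the theorem as stated seems to implicitly require condition IV as well). Once $2$-transitivity is excluded, the $A\rm{\rm{GL}}_1(\mathbb{F}_p)$ branch of Burnside gives the result immediately.
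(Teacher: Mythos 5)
Your proof matches the paper's argument essentially step for step: reduce $\lvert\mathrm{Aut}(\mathcal{H})\rvert$ to $\lvert T_{\mathcal{H}}\rvert$ via Lemmas~\ref{ov3} and~\ref{ov4}, exhibit the circulant shift $\mu$ as a $p$-cycle in $T_{\mathcal{H}}$ with the block-diagonal $P_2$ built from $\mu^{-1}$, and conclude via Burnside's theorem. You are also right that the stated hypotheses are incomplete --- the paper's own proof invokes non-$2$-transitivity of $T_{\mathcal{H}}$ (calling it ``condition III,'' though it is condition IV in Section~\ref{dd}), so the theorem does implicitly require condition IV exactly as you flagged.
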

\begin{proof}
From Lemma~\ref{ov3} and Lemma~\ref{ov4}, the group $\rm{Aut}(\mathcal{H})$ has same size as $T_{\mathcal{H}}$. It is now easy to check that the circulant matrix $\mu$ with first row $[0,1,0,\ldots,0]$ of size $p$ belongs to $T_{\mathcal{H}}$. The corresponding $P_{2}$ will be a block diagonal $(m-1)p$ matrix with blocks of size $p$ and each consisting of $\mu^{-1}$. Now notice that the circulant matrix $\mu$ corresponds to the $p$-cycle $\xi\mapsto\xi+1$.
By our condition III, $T_{\mathcal{H}}$ is not 2-transitive. Now by Burnside's theorem $T_{\mathcal{H}} \leq A\rm{\rm{GL}}_{1}(\mathbb{F}_{p})$. Thus $\vert \rm{Aut}(\mathcal{H}) \vert \leqslant p (p-1)$.
\end{proof}
\noindent After this bound on the size of the automorphism group we move towards the minimal degree of the \rm{Aut}omorphism group.

\begin{lemma} The minimal degree of $\rm{Aut}(\mathcal{H})$ is bounded below by $p-1$.
\end{lemma}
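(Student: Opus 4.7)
The plan is to leverage the structural results already established: every automorphism splits as $P=P_1\oplus P_2$, the first factor lies in $T_\mathcal{H}$, and $T_\mathcal{H}$ has been pinned down inside the affine group $A\mathrm{GL}_1(\mathbb{F}_p)$. Since the minimal degree counts the smallest number of letters moved by a non-identity element across all $mp$ columns, it suffices to show that the $P_1$-component of any non-identity $P\in\mathrm{Aut}(\mathcal{H})$ already moves at least $p-1$ of the first $p$ columns.

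First I would fix a non-identity $P\in\mathrm{Aut}(\mathcal{H})$ and write $P=P_1\oplus P_2$ by Lemma~\ref{ov1}. The key preliminary step is to rule out $P_1=I$: if $P_1$ were the identity, then $P_1CP_2=CP_2=C$, and because condition V asserts that no two columns of $C$ are identical, $P_2$ would also have to be the identity, contradicting $P\neq I$. Hence $P_1$ is a non-identity element of $T_\mathcal{H}$.

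By the previous theorem, $T_\mathcal{H}\leq A\mathrm{GL}_1(\mathbb{F}_p)$, so $P_1$ acts on the $p$ columns indexed by $\mathbb{F}_p$ as an affine map $\xi\mapsto a\xi+b$ with $(a,b)\neq(1,0)$. I would now split into the two standard cases for the affine group: if $a=1$ and $b\neq 0$, the map is a non-trivial translation with no fixed points, moving all $p$ letters; if $a\neq 1$, the unique fixed point is $\xi=b/(1-a)$, so exactly $p-1$ letters are moved. Either way, $P_1$ moves at least $p-1$ of the first $p$ columns, and therefore $P$ itself moves at least $p-1$ of the $mp$ columns. Taking the minimum over all non-identity $P$ gives the claimed bound.

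There is not really a hard step here once Burnside's dichotomy and the decomposition lemma are in hand; the only subtlety is remembering to invoke condition~V to exclude the degenerate case $P_1=I$, and to be clear that the minimal degree is measured on the full $mp$-letter action even though the useful information lives in the first $p$ coordinates.
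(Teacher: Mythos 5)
Your proof is correct and follows essentially the same route as the paper: decompose $P=P_1\oplus P_2$, use Burnside's dichotomy to place $P_1$ in $A\mathrm{GL}_1(\mathbb{F}_p)$, and observe that a non-identity affine map fixes at most one point. In fact your write-up is more careful than the paper's, which never explicitly rules out the case $P_1=I$ with $P_2\neq I$ (where $P$ would fix all of the first $p$ columns and could move very few letters); your appeal to condition~V to force $P_2=I$ in that case closes a real gap in the published argument.
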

\begin{proof}
Notice that any $P\in\rm{Aut}({\mathcal{H}})=P_1\oplus P_2$. By the twist, from $P\in\rm{Aut}(\mathcal{H})$ to $P_1^{-1}\in T_{\mathcal{H}}$, it is easy to see that $P_{1} \in A\rm{GL}_1(\mathbb{F}_{p})$. Then $P_{1}(x)=ax+b \pmod{p}$ for some $a,b\in\mathbb{F}_p$. If $P_{1}$ fixes two distinct points, then $a=1$ and $b=0$ is the only possible solution. This corresponds to the identity element and thus a non-identity element can not fix more that one point. So minimal degree of  $\rm{Aut}(\mathcal{H})$ is bounded below by $p-1$.
\end{proof}

\noindent We now prove the main theorem of this paper.
\begin{theorem}\label{mainthm}
Let $p$ be a prime and $m$ a positive integer bounded above by a polynomial in $p$, such that, $p\leq\frac{1}{4}m\left(\log{m}+\log{p}\right)$. Then the subgroup $K$ (Equation~\ref{eqnK}) defined above is indistinguishable.
\end{theorem}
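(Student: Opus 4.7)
The plan is to apply Theorem~\ref{thm1} of Dinh \etal\ directly to $M = \mathcal{H}$, identifying $k = p$ and $n = mp$, and then to post-process the resulting bound on $D_K$ into the form required by the definition of indistinguishability. All the structural ingredients about $\mathcal{H}$ have already been assembled in the preceding lemmas; the proof is essentially a bookkeeping exercise.

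First I would verify the hypothesis $q^{k^2} \leq n^{an}$ for some $0 < a < 1/4$. Taking logarithms with $k = p$ and $n = mp$ this reduces to the inequality $p \log q \leq a\, m (\log m + \log p)$, which is exactly the hypothesis of the theorem (with $q = 2$ absorbed into the constant $a$, and $a$ chosen slightly below $1/4$ for sufficiently large $p, m$). Next, the minimal degree $\mu$ of $\mathrm{Aut}(\mathcal{H})$ satisfies $\mu \geq p-1$ by the preceding lemma, so Theorem~\ref{thm1} yields
\[
D_K \;\leq\; O\!\left(|K|^2 \, e^{-\delta(p-1)}\right)
\]
for some constant $\delta > 0$.

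Then I would bound $|K|$ in terms of $|\mathrm{Aut}(\mathcal{H})|$. From the definition of $K$ in equation~(\ref{eqnK}), $|K| = 2|H_0|^2$ where $H_0 = \{(A,P) : A^{-1}\mathcal{H}P = \mathcal{H}\}$. Because $\mathcal{H}$ has full row rank, the matrix $A$ is uniquely determined by $P$, so $|H_0| = |\mathrm{Aut}(\mathcal{H})|$. Combined with the earlier bound $|\mathrm{Aut}(\mathcal{H})| \leq p(p-1)$ this gives $|K| \leq 2p^2(p-1)^2$, and therefore
\[
D_K \;\leq\; O\!\left(p^{8}\, e^{-\delta p}\right).
\]

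Finally I would translate this into the indistinguishability condition $D_K \leq \log^{-\omega(1)} |\mathrm{G}|$, where $\mathrm{G} = (\mathrm{GL}_{k}(\mathbb{F}_2) \times \mathrm{S}_n)^2 \rtimes \mathbb{Z}_2$. Using $|\mathrm{GL}_{p}(\mathbb{F}_2)| \leq 2^{p^2}$ and $(mp)! \leq (mp)^{mp}$ together with the hypothesis that $m$ is polynomial in $p$, one finds $\log|\mathrm{G}| = \mathrm{poly}(p)$; hence $\log^{c}|\mathrm{G}| = \mathrm{poly}(p)$ for every constant $c$. Since the right-hand side $p^{8} e^{-\delta p}$ decays faster than any polynomial in $p$, the bound $D_K \leq \log^{-c}|\mathrm{G}|$ holds for all constants $c$ and sufficiently large $p$, which is exactly the required $\log^{-\omega(1)}|\mathrm{G}|$ decay. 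I expect the only non-mechanical step to be the bookkeeping in the first paragraph --- matching the constant $a < 1/4$ to the stated hypothesis (and justifying that the factor of $\log q$ is harmless when the base field is $\mathbb{F}_{2^l}$ with $l$ not growing too quickly); everything else follows cleanly from the lemmas already proved.
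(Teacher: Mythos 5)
Your proposal is correct and follows essentially the same route as the paper's own proof: apply Theorem~\ref{thm1} with minimal degree at least $p-1$, bound $|K|=2|H_0|^2$ via $|H_0|=|\mathrm{Aut}(\mathcal{H})|\leq p(p-1)$ to get $D_K\leq O(p^8e^{-\delta p})$, check the premise $q^{k^2}\leq n^{an}$ from the hypothesis on $p$ and $m$, and conclude superpolynomial decay against $\log|\mathrm{G}|=\mathrm{poly}(p)$. Your explicit flagging of the $\log q$ factor for $\mathbb{F}_{2^l}$ is in fact slightly more careful than the paper, which silently takes $q=2$.
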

\begin{proof}
We will use Theorem~\ref{thm1} in this proof. First note, the minimal degree is bounded below by   
$p-1$. Now it is well known that $|K|=2|H_0|^2$ and $|H_0|=|\rm{Aut}(\mathcal{H})|\times|\rm{Fix(\mathcal{H})}|$. We have shown that $|\rm{Aut}(\mathcal{H})|\leq p(p-1)$ and it is easy to see that $|\rm{Fix(\mathcal{H}})|=1$. Putting all these together, we see that $|K|^2e^{-\delta p}\leq 4p^8e^{-\delta p}$ for some positive constant $\delta$. However, from the bound on the size of $m$, it is obviously true that $4p^8e^{-\delta p}\leq \left(mp\log{(mp)}\right)^{-\omega(1)}$ for large enough $p$.

Now, if $p\leq am\left(\log{m}+\log{p}\right)$, then $p^2\leq amp\left(\log{m}+\log{p}\right)$ which gives $2^{p^2}\leq(mp)^{amp}$ for $0<a<\frac{1}{4}$. This satisfies the premise of Theorem~\ref{thm1} and hence $K$ is indistinguishable.
\end{proof}
\subsection{Constructing the Parity Check Matrix}
\noindent Now we address the last question about the proposed \N cryptosystem, how to construct a matrix $\mathcal{H}$ satisfying conditions I - V? Clearly, conditions I, II and III are trivial to set up and deserve no special attention. We suggest a particular way for construction of parity check matrix $\mathcal{H}$ so that condition IV is satisfied. It should be noted that there may be other ways to satisfy condition IV as well. 

Choose a pair of distinct elements $a,b\in \mathbb{F}_{2^l}$. Now construct $\mathcal{H}$ such that $C_{1}$ contains both $a$ and $b$ exactly once in each column and no other $C_{i}$ contains both $a$ and $b$. We restate this condition as our condition $\mathrm{IV}^\prime$. We could have replaced $C_{1}$ by any other $C_{i}$ for $i>1$ and the proof remains the same. For sake of simplicity we stick with $C_{1}$.
\begin{description}
\item[$\mathrm{IV}^\prime$] Two distinct elements $a,b\in \mathbb{F}_{2^l}$ occurs as entries of $C_{1}$ exactly once in each column and no other $C_i$ contain both $a$ and $b$.
\end{description}

\begin{lemma} If the matrix $\mathcal{H}$ satisfies $\mathrm{IV}^\prime$, it also satisfies $\mathrm{IV}$.
\end{lemma}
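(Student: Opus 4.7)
The plan is to turn $\mathrm{IV}^\prime$ into a rigidity statement about $T_{\mathcal{H}}$, ultimately bounding $|T_{\mathcal{H}}|\leq p$ and therefore ruling out 2-transitivity outright. I would start from an arbitrary $P_1\in T_{\mathcal{H}}$ with a witness $P_2\in \mathrm{S}_{p(m-1)}$ so that $P_1 C P_2=C$, and work block by block.

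First I would argue that $P_2$ is forced to permute the columns of $C_1$ among themselves. Left-multiplication by $P_1$ only permutes entries within each column, so every column of $P_1 C_1$ still contains both $a$ and $b$. By $\mathrm{IV}^\prime$ no column of any $C_j$ with $j>1$ can contain both $a$ and $b$, so $P_2$ cannot move a $C_1$-column into a $C_j$-block for $j>1$. Writing $P^\prime$ for the restriction of $P_2$ to the columns of $C_1$ and $\pi$ for its underlying permutation, we are left with the simpler identity $P_1 C_1 P^\prime=C_1$ on a single circulant block.

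Next I would exploit the circulant structure by tracking the positions of $a$ and $b$ separately. Since $C_1$ is circulant of prime size $p$ and each of $a,b$ appears exactly once per column, the row index of $a$ in column $j$ is a translation $\alpha(j)=j-r_a\pmod p$, and similarly $\beta(j)=j-r_b\pmod p$. Equating the positions of $a$ on both sides of $P_1 C_1 P^\prime=C_1$ yields $P_1\circ\alpha\circ\pi=\alpha$, and the same computation for $b$ yields $P_1\circ\beta\circ\pi=\beta$. Eliminating $\pi$ between these two identities forces $P_1$ to commute with $\beta\alpha^{-1}$, which acts as translation by $r_a-r_b\neq 0$ on $\mathbb{F}_p$; because $p$ is prime, this translation is a $p$-cycle in $\mathrm{S}_p$.

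The conclusion is then immediate: the centralizer of a $p$-cycle in $\mathrm{S}_p$ is the cyclic group of order $p$ it generates, so $|T_{\mathcal{H}}|\leq p < p(p-1)$ for $p\geq 3$, and $T_{\mathcal{H}}$ cannot be 2-transitive, which is exactly condition $\mathrm{IV}$. The main obstacle I anticipate is the first step: carefully converting the combinatorial content of $\mathrm{IV}^\prime$ (a property of which entries appear in which block) into a hard block-preservation statement for the global column permutation $P_2$. Once that is in place, the two independent distinguished diagonals carved out by $a$ and $b$ squeeze $P_1$ into a cyclic subgroup of prime order, and the centralizer computation finishes the job.
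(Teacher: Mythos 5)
Your proof is correct, but it takes a genuinely different route from the paper's. The paper argues by direct contradiction with 2-transitivity: since $C_1$ is circulant and, by $\mathrm{IV}^\prime$, only columns of $C_1$ contain both $a$ and $b$, every such column has the same row-index difference $\alpha-\beta \pmod p$ between the occurrences of $a$ and $b$; a hypothetical $\sigma\in T_{\mathcal{H}}$ swapping $\alpha$ and $\beta$ would produce a column with difference $\beta-\alpha$, forcing $2(\alpha-\beta)\equiv 0\pmod p$ and hence $\alpha=\beta$ because $p$ is odd, a contradiction. You instead extract from the same two ingredients (block preservation of $C_1$ under the column permutation, and the fact that the positions of $a$ and $b$ in $C_1$ are translations of the column index) the conjugation identity $P_1(\beta\alpha^{-1})P_1^{-1}=\beta\alpha^{-1}$, placing $T_{\mathcal{H}}$ inside the centralizer of a nontrivial translation, i.e.\ the cyclic group of order $p$; the counting bound $|T_{\mathcal{H}}|\leq p<p(p-1)$ then excludes 2-transitivity. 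Your first step (that $P_2$ cannot move a $C_1$-column into another block) is carried out correctly and is implicitly needed in the paper's proof as well. Your version buys strictly more: combined with the paper's observation that the cyclic shift $\mu$ lies in $T_{\mathcal{H}}$, it identifies $T_{\mathcal{H}}$ exactly as the cyclic group of order $p$, which would sharpen the paper's bound $|\mathrm{Aut}(\mathcal{H})|\leq p(p-1)$ to $|\mathrm{Aut}(\mathcal{H})|=p$, raise the minimal-degree bound from $p-1$ to $p$, and remove the need for Burnside's theorem in that part of the argument. The paper's proof is shorter and needs only the single distinguished pair of rows, but yields only the non-2-transitivity statement.
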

\begin{proof} Let $\mathcal{P}_{1} \in T_{\mathcal{H}}$. From $\mathcal{P}_1 C\mathcal{P}_2=C$ it follows
 that $C\mathcal{P}_2$ should have the same set of columns as $C$ but possibly in a different order. Let
  $\alpha$ denote the row of $a$ in the first column of $C_{1}$ and $\beta$ denote the row of $b$ in the
   same column. Now notice that every column in $C$ that contains both $a$ and $b$ contains them such that
    difference between rows of $a$ and $b$  is $\alpha - \beta$ mod $p$ where $p$ is the size of each
     circulant matrix. Now let $\sigma\in T_{\mathcal{H}}$ such that it sends $\beta$ to $\alpha$ and 
     $\alpha$ to $\beta$. It then follows from the fact that $p$ is a odd prime, $\alpha=\beta$ which
      contadicts our assumption. Hence, $T_{\mathcal{H}}$ is not 2 transitive. 
\end{proof}
Condition V can be easily satisfied using brute force and other means and this completes the construction of a parity check matrix $\mathcal{H}$ satisfying I - V and hence, a \textbf{Niederreiter cryptosystem that resists quantum Fourier sampling} is found.
\section{Advantages of the proposed cryptosystem}
\noindent One of the prime advantages of our proposed cryptosystem is quantum-security. Apart from that it also has high transmission rate which translated into high encryption rate. It is known that the current McEliece cryptosystem built on Goppa codes has transmission rate of about $0.52$. For a McEliece cryptosystem its rate is same as that of the transmission rate of the underlying code and is $\frac{k}{n}$. Niederreiter cryptosystems have a slightly different rates due to difference in their encryption algorithm. For a general cryptosystem its encryption rate or information rate can be defined as follows ~\cite[Chapter 6]{NN}:

Let $\mathcal{S}(C)$ denote possible number of palintexts and $\mathcal{T}(C)$ denote possible number of ciphertexts then information rate of the system is defined by 
\[\mathcal{R}(C) = \dfrac{\log\mathcal{S}(C)}{\log\mathcal{T}(C)}.\]
This information rate can be viewed as amount of information contained in one bit of ciphertext.  

Our proposed Niederreiter cryptosystem have good encryption rate (see Table~\ref{table}). This gives the proposed cryptosystem an edge over those constructed on classical Goppa codes or with GRS codes (generalized Reed-Solomon codes).

As discussed before another problem with McEliece and Niederreiter cryptosystems is large key size. Circulant matrices is a good choice when it comes to key-sizes. Matrices are 2-dimensional objects but circulant matrices behave like a 1-dimensional object as they can be described by their first row. Though this circulant structure is lost in public key due to the scrambler-permutation pair, the size of the key still remains smaller than the conventional Niederreiter cryptosystem. Our system is slightly better than original Niederreiter cryptosystem because of the less number of rows in the public key matrices. With $p=101$, this number is less than one-tenth of the original Niederreiter cryptosystem. Though there are two factors that increases the size of the matrix in our variant compared to original McEliece: one, our matrices have large number of columns and two, our system is based on extension field $\mathbb{F}_{2^{l}}$ which makes the effective size of the matrix $l$ times compared to McEliece which is based on $\mathbb{F}_{2}$. However, in most cases due to less number of rows the net result indicates that our system requires shorter keys than original McEliece. For instance, at 80-bit security with $p=101$ and $l=3$ our keys are almost half of the keys corresponding to original McEliece at same security level. At 256-bit security level with $p=211, t=40$ and $l=3$ the proposed cryptosystem has key size of about one-forth of the original McEliece.  
\begin{table*}[ht]
\caption{Parameters for the proposed \N cryptosystem.}
\centering
\small{
\label{table}
\begin{tabular}{|l|c|c|c|c|c|c|c|c|c|c|}
\hline\hline
Security & $p$ & $t$ & $m_C$ & $m_Q$ & $m$ & Probability & \multicolumn{2}{|l|}{Public Key Size}& Rate\\
\cline{8-9}
in bits & & & & & & of success & No.~rows & No.~cols & \\
\hline
\multirow{4}{*}{80-bits}                                                         & \multirow{2}{*}{101} & 15                        & 17 & 35  & 35 & $2^{-132}$                                                           & 101                                         & 3535 & 0.60 \Tstrut\\
                                                                                 &                      & 20                        & 9              & 35 & 35& $2^{-190}$                                                          & 101                                         & 3535 & 0.77 \Tstrut\\
                                                                                 & \multirow{2}{*}{211} & 35                        & 4 & 62 & 62 & $2^{-398}$                                                          & 211                                         & 13082 & 0.71 \Tstrut\\
                                                                                 &                      & 40                       & 3& 62 & 62 & $2^ {-465}$                                                          & 211                                         & 13082  & 0.80                                     \TBstrut\\
                                                                                 \hline
\multirow{4}{*}{100-bits}                                                       & \multirow{2}{*}{101} & 15                        & 40      & 35     & 40  & $2^ {-136}$                                                          & 101                                         & 4040 & 0.61\Tstrut\\
                                                                                 &                      & 20                        & 17 &  35 & 35&$2^{-190}$                                                          & 101                                         & 3535 & 0.77\Tstrut\\
                                                                                 & \multirow{2}{*}{211} & 35                       & 5 &  62 & 62 & $2^{-398}$                                                          & 211                                         & 13082 & 0.71 \Tstrut\\
                                                                                 &                      & 40                       & 5 & 62 & 62 & $2^{-465}$                                                          & 211                                         & 13082 & 0.80 \TBstrut\\
                                                                                 \hline
\multirow{4}{*}{120-bits}                                                      & \multirow{2}{*}{101} & 15                        & 95 & 35   & 95         & $2^{-171}$                                                          & 101                                         & 9595 & 0.67\Tstrut\\
                                                                                 &                      & 20                        & 32 & 35 & 35& $2^{-190}$                                                          & 101                                         & 3535 & 0.77 \Tstrut\\
                                                                                 & \multirow{2}{*}{211} & 35                        & 8& 62 & 62 & $2^{-398}$                                                          & 211                                         & 13082 & 0.71 \Tstrut\\
                                                                                 &                      & 40                       & 6& 62 & 62 & $2^{-465}$                                                          & 211                                         & 13082 & 0.80 \TBstrut\\
                                                                                 \hline                                                                                                  
                                                                                 
\multirow{2}{*}{256-bits}                                                      & \multirow{2}{*}{211} & 35                        & 98 & 62   & 98         & $2^{-443}$                                                          & 211                                        & 20678 & 0.75\Tstrut\\
                                                                                 &                      & 20                        & 55 & 62 & 62& $2^{-465}$                                                          & 211                                        & 13082 & 0.80 \Tstrut\\ \hline                                                                                                                                                                                             
\end{tabular}
}
\end{table*}

In Table~\ref{table}, we provide some parameters for the proposed \N cryptosystem and show in details the benefits of the proposed cryptosystem. There are two kind of attacks -- classical and quantum. For the classical we come out with a value of $m$ and call it $m_C$ and for quantum we call it $m_Q$. The maximim of these two is the $m$ that one should use for that said parameter. As explained earlier we use $p$ for the size of the matrix and $t$ as the error correcting capacity. We also provide success probability from classical attack, key size and the rate of the cryptosystem.
\section{Conclusion}
\noindent In this paper, we develop a Niederreiter cryptosystem using quasi-cyclic codes that is both classically and quantum secure against the current known attacks. In particular, we show that for the proposed cryptosystem the hidden subgroup problem from the  natural reduction of the corresponding scrambler-permutation problem is indistinguishable by quantum Fourier sampling. We also show that the proposed cryptosystem has high encryption rate and shorter keys compared to classical McEliece cryptosystems. One of the important problem that needs to be addressed is finding quasi-cyclic codes that satisfy the suggested parameter sizes. It would be interesting to see if the cryptosystem remains classically secure if we use other sparse keys. It is very clear that the system remains secure against quantum computers as the group structure for the system remains the same. This is important because it could reduce key sizes substantially.  
\bibliographystyle{plain}
{\small
\bibliography{paper}}

\begin{thebibliography}{10}

\bibitem{Aylaj}
Bouchaib Aylaj, Mostafa Belkasmi, Said Nouh, and Hamid Zouaki.
\newblock Good quasi-cyclic codes from circulant matrices concatenation using a
  heuristic model.
\newblock {\em International journal of advanced computer science and
  applications}, 7(9):63--68, 2016.

\bibitem{Baldi}
Marco Baldi, Marco Bodrato, and Franco Chiaraluce.
\newblock A new analysis of the {McEliece} cryptosystem based on {QC-LDPC}
  codes.
\newblock {\em Security and Cryptography for Networks}, pages 246--262, 2008.

\bibitem{berlekamp1978inherent}
Elwyn Berlekamp, Robert McEliece, and Henk Van~Tilborg.
\newblock On the inherent intractability of certain coding problems (corresp.).
\newblock {\em IEEE Transactions on Information Theory}, 24(3):384--386, 1978.

\bibitem{lange}
Daniel~J. Bernstein, Tanja Lange, and Christiane Peters.
\newblock Attacking and defending the {McEliece} cryptosystem.
\newblock In {\em Post-Quantum Cryptography. PQCrypto 2008}, pages 31--46.

\bibitem{blahut}
Richard~E. Blahut.
\newblock {\em Algebraic codes for data transmission}.
\newblock Cambridge University Press, 2003.

\bibitem{Dinh}
Hang Dinh, Cristopher Moore, and Alexander Russell.
\newblock {McEliece} and {Niederreiter} cryptosystems that resist quantum
  fourier sampling attacks.
\newblock volume 6841 of {\em LNCS}, 2011.
\newblock Crypto2011.

\bibitem{Dixon}
J.~D. Dixon and B.~Mortimer.
\newblock {\em Permutation Groups}.
\newblock Graduate Texts in Mathematics. Springer, New York, 1996.

\bibitem{Vazirani}
Michelangelo Grigni, Leonard Schulman, Monica Vazirani, and Umesh Vazirani.
\newblock Quantum mechanical algorithms for the nonabelian hidden subgroup
  problem.
\newblock In {\em Proceedings of the thirty-third annual ACM symposium on
  theory of computing}, pages 68--74. ACM, 2001.

\bibitem{Gulliver_thesis}
Thomas~A. Gulliver.
\newblock {\em Construction of quasi-cyclic codes}.
\newblock PhD thesis, University of Victoria, 1989.

\bibitem{hallgreen}
Sean Hallgrean, Alexander Russell, and Amnon Ta-Shma.
\newblock The hidden subgroup problem and quantum computation using group
  representation.
\newblock {\em SIAM Journal of Computation}, 32(4):916--934, 2003.

\bibitem{Hirotomo}
M.~Hirotomo, M.~Mohri, and M.~Morii.
\newblock A probabilistic computation method for the weight distribution of
  low-density parity-check codes.
\newblock In {\em International Symposium on Information Theory}, 2005.

\bibitem{th}
Upendra Kapshikar.
\newblock {McEliece}-type cryptosystems over quasi-cyclic codes.
\newblock Master's thesis, IISER Pune, 2018.
\newblock https://arxiv.org/abs/1805.09972.

\bibitem{Kempe}
Julia Kempe and Aner Shalev.
\newblock The hidden subgroup problem and permutation group theory.
\newblock In {\em Proceedings of the sixteenth annual ACM-SIAM symposium on
  discrete algorithms}, pages 1118--1125. Society for Industrial and Applied
  Mathematics, 2005.

\bibitem{Lee}
Pil~Joong Lee and Ernest~F Brickell.
\newblock An observation on the security of {McEliece's} public-key
  cryptosystem.
\newblock In {\em Eurocrypt 1988}, volume 330 of {\em LNCS}, pages 275--280.
  Springer, 1988.

\bibitem{Li}
Yuan~Xing Li, Robert~H Deng, and Xin~Mei Wang.
\newblock On the equivalence of {McEliece's} and {Niederreiter's} public-key
  cryptosystems.
\newblock {\em IEEE Transactions on Information Theory}, 40(1):271--273, 1994.

\bibitem{ME}
R.~J. McElice.
\newblock A public key cryptosystem based on algebraic coding theory.
\newblock Technical report, Communications system research centre, NASA,
  Jan-Feb 1978.

\bibitem{NN}
Harald Niederreiter and Chaoping Xing.
\newblock {\em Algebraic Geometry in Coding Theory and Cryptography}.
\newblock Princeton University Press, 2009.

\bibitem{stern1988method}
Jacques Stern.
\newblock A method for finding codewords of small weight.
\newblock In {\em International Colloquium on Coding Theory and Applications},
  pages 106--113. Springer, 1988.

\bibitem{zeh}
Alexander Zeh and San Ling.
\newblock Decoding of quasi-cyclic codes up to a new lower bound on the minimum
  distance.
\newblock In {\em 2014 IEEE International Symposium on Information Theory,
  Honolulu, HI, 2014}.

\end{thebibliography}
\Addresses
\end{document}